\newtheorem{theorem}{Theorem}
\theoremstyle{definition}
\newtheorem{example}{Example}
\title{More Efficient Identifiability Verification in\\ ODE Models by Reducing Non-Identifiability}
\author{
    Ilia Ilmer\\
    Ph.D. Program in Computer Science,\\
Graduate Center CUNY,\\
New York, NY, USA
    \And
Alexey Ovchinnikov\\
Ph.D. Program in Mathematics,\\
    CUNY Queens College,\\
    New York, NY, USA
\And
Gleb Pogudin,\\
LIX, CNRS, École Polytechnique,\\
Institute Polytechnique de Paris,\\
Paris, France
\And 
Pedro Soto\\
Ph.D. Program in Computer Science,\\
Graduate Center CUNY,\\
New York, NY, USA}
\date{}
\begin{document}
\maketitle
\begin{abstract}

    Structural global parameter identifiability indicates whether one can determine a parameter's value from given inputs and outputs in the absence of noise. If a given model has parameters for which there may be infinitely many values, such parameters are called non-identifiable. We present a procedure for accelerating a global identifiability query by eliminating algebraically independent non-identifiable parameters. Our proposed approach significantly improves performance across different computer algebra frameworks.

\end{abstract}

\section{Introduction}

Structural parameter identifiability is an crucial for design of mathematical models with ordinary differential equations (ODEs). For a given model, we may ask whether a parameter (or multiple parameters) can be discovered given sufficiently strong inputs and noiseless outputs. If the answer is positive, we say that a parameter (or multiple ones) is \emph{structurally} identifiable. We can further categorize structural identifiability into \emph{local} and \emph{global} identifiability. The former corresponds to multiple possible parameter values that can be recovered for a given model. The latter means that a parameter value can be recovered uniquely. Otherwise, we say that a parameter is non-identifiable.

Existing programs for parameter identifiability analysis rely on differential algebra and algebraic geometry. One example is SIAN \cite{hong_sian_2019,hong_global_2020}, implemented in {\sc Maple} \cite{hong_global_2020,ilmer_web-based_2021}, and Julia \cite{sian_julia_github}. It uses F4 algorithm \cite{faugere_new_1999} to compute Gr\"obner basis to determine global identifiability properties. This step can be computationally costly and has double exponential theoretical complexity in the worst case \cite{mayr_complexity_1982}. There have been further developments in Gr\"obner basis computation, see for instance \cite{bardet_complexity_2015,eder_survey_2017,faugere_new_2002}.

SIAN produces a polynomial system from the input ODE model. At that time in the program, the locally identifiable and non-identifiable parameters are known. Both of these classes of parameters are present in the polynomial system, however, if we know that some are non-identifiable, we can substitute several of them with numerical values thus reducing the workload for F4 algorithm.

In this work, we present a method for finding the combination of non-identifiable parameters that can be substituted in SIAN and significantly reduce the runtime of Gr\"obner basis algorithm. We will demonstrate the result of our algorithm on a collection of ODE models that may present a challenge for the F4 algorithm. The computation is performed in {\sc Maple} 2021.2 and Magma V2.26-8 on a computer cluster with 64 Intel Xeon CPUs with 2.30GHz clock frequency and 755 GB RAM.

\section{Related work}
\label{sec:relwork}

\subsection{Gr\"obner basis computation}

The original algorithm for finding a Gr\"obner basis of a polynomial ideal was presented by Buchberger in~\cite{buchberger_theoretical_1976}. However, the solution depends on multiple decisions such as selection strategy of polynomials and can be time-consuming~\cite{mayr_complexity_1982,giovini_one_1991}.
Faug\`ere presented F4~\cite{faugere_new_1999} and later F5~\cite{faugere_new_2002} algorithms that leverage better selection strategies for polynomials and linear algebra during computation. Recently,~\cite{bardet_complexity_2015} addressed the termination and complexity properties of the F5 algorithm. For an overview of F5-based solutions, see~\cite{eder_survey_2017}.

\subsection{Parameter identifiability}

Solutions for the identifiability problem have implementations in various programming languages. Structural Identifiability Analyzer (SIAN)  \cite{hong_sian_2019,hong_global_2020} was implemented in {\sc Maple} and Julia \cite{sian_julia_github} and is capable of global and local identifiability analysis. Algorithms for finding multi-experiment identifiable combinations \cite{ovchinnikov2020computing,ovchinnikov2020multi} extended SIAN and are available on the web \cite{ilmer_web-based_2021}. Fast local identifiability check based on power series was presented in \cite{sedoglavic2002probabilistic}. A new global identifiability algorithm of \cite{dong2022differential} is implemented in Julia and has been included into the Julia's Scientific Machine Learning (SciML) infrastructure. Among other widely used
packages, we highlight such solutions as web-based COMBOS~\cite{meshkat2014finding} and COMBOS 2 \cite{kalami2020combos2},
DAISY \cite{saccomani2008daisy} and DAISY 2 \cite{saccomani2019new} and GenSSI 2.0 \cite{ligon2018genssi}. For a deeper overview of existing identifiability methods, algorithms, software, and benchmarks we refer to \cite{miao2011identifiability,chis2011structural,villaverde2019benchmarking,villaverde2019observability,raue2014comparison}.

\section{Main result}
\label{sec:main_result}

\subsection{Preliminary information}

We aim to accelerate global identifiability assessment of SIAN~\cite{hong_sian_2019}. Let us present the typical input format accepted by the program
\begin{equation}~\mathbf{\Sigma} :=~\begin{cases}
        \mathbf{x}' & =\mathbf{f}(\mathbf{x},~\boldsymbol\mu,~\mathbf{u}),~ \\
        \mathbf{y}  & =\mathbf{g}(\mathbf{x},~\boldsymbol\mu,~\mathbf{u}).
    \end{cases}
    \label{eq:ode}
\end{equation}
where~\(\mathbf{f}=(f_1,\dots, f_n)\) and~\(\mathbf{g}=(g_1,\dots,g_n)\) with~\(f_i=f_i(\mathbf{x},~\boldsymbol\mu,~\mathbf{u})\),~\(g_i=g_i(\mathbf{x},~\boldsymbol\mu,~\mathbf{u})\) are rational functions over the field of complex numbers~\(\mathbb{C}\).

The vector~\(\mathbf{x}=(x_1,\dots,x_n)\) represents the time-dependent state variables and~\(\mathbf{x}'\) represents their derivatives. The time-dependent vector-function~\(\mathbf{u}=(u_1,\dots,u_s)\) represents the input variables. The~\(m\)-vector~\(\mathbf{y}=(y_1,\dots,y_n)\) represents the output variables. The vector~\(\boldsymbol\mu=(\mu_1,\dots,\mu_{\lambda})\) represents constant parameters and~\(\mathbf{x}^\ast=(x_1^\ast,\dots,x_n^\ast)\) defines initial conditions of the model.

The output functions \(\mathbf{y}\) are differentiated to compute truncated Taylor polynomials at time \(t=0\), see \cite[Theorems 3.16, 4.12]{hong_global_2020} for details on the truncation bound. Further, SIAN samples \(\mathbf{x}^*,\boldsymbol{\mu}\) to evaluate each component \(y_i\) and its derivatives.
Gr\"obner basis lets one check if the sampled quantities \(\mathbf{x}^*,\boldsymbol{\mu}\) are unique that result in the computed values \(y_i\). %Gr\"obner basis lets one check if the value \(y_i\) obtained from sampled quantities \(\mathbf{x}^*,\boldsymbol{\mu}\) is unique.
This makes SIAN a randomized Monte-Carlo algorithm, and the
user can specify the probability of correctness.

\subsection{Finding transcendence basis}

At the time of computing Gr\"obner basis in SIAN, we know the local identifiability and non-identifiability of all parameters. It is tempting to exclude the latter
from further consideration by numerical substitution. However, there is a subtlety, as the choice among non-identifiable parameters may affect the identifiability properties of others.
\begin{example}
    \label{example1}
    Consider the following ODE system
    \begin{equation}
        \begin{cases}
            \dot{x}_1 = -p_1x_1 + p_2x_2 + u, & \dot{x}_2 = p_3x_1 - p_4x_2 + p_5x_3, \\
            \dot{x}_3 = p_6x_1 - p_7x_3,      & y = x_1.
        \end{cases}
    \end{equation}

    The Structural Identifiability Toolbox \cite{ilmer_web-based_2021} reports that parameter \(p_6\) as non-identifiable. Assume we would like to substitute \(p_6\Rightarrow 0\) everywhere in the ODE.
    As a result,
    only-locally-identifiable parameters \(p_4, p_7\) become globally identifiable. If we substitute non-zero numbers into all non-identifiable parameters, for instance, \(p_2\Rightarrow 131,~p_3\Rightarrow 93,~p_5\Rightarrow17,~p_6\Rightarrow 41\), the outcome is the same.
\end{example}

To increase the efficiency by maximizing our substitution choices, we perform substitutions into a maximal set of algebraically independent variables  \(B\) (a transcendence basis) in the polynomial system $E^t$ produced in  \cite[Algorithm~1, Step~2]{hong_global_2020}. Such a set is found in \Cref{algo:alg_indep}.
In the implementation,  the Jacobian matrix  entries are sampled randomly according to   \cite[Algorithm~1, Step~3]{hong_global_2020}
to find pivot columns. We can use the same sampling bound as in \cite{hong_global_2020} to get the pivot columns with the same probability of correctness.

\LinesNumbered{}
\begin{algorithm}
    \caption{Finding algebraically independent parameters}\label{algo:alg_indep}
    \KwIn{Polynomial system \(E^t\) in variables $X$
    }
    \KwOut{%Collection 
        Set of algebraically independent variables
    }
    \(JacobianMatrix \gets \frac{\partial E^t}{\partial X};\)\\
    \(JacobianMatrix.sample();\)\\
    \(Pivots \gets JacobianMatrix.pivotColumns();\)\\
    \KwRet{\(\{x \mid
        x\not\in Pivots\}\)}
\end{algorithm}

\begin{theorem}\label{thm:main}
    Under the substitution of random uniformly distributed integers from $[1,\frac{4}{3}D_2]$ (where $D_2$ is defined in \cite[Algorithm~1]{hong_global_2020}) into the variables in $B$, the  probability $p$ of correctness of \cite[Algorithm~1]{hong_global_2020} is still guaranteed.
\end{theorem}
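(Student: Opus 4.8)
The plan is to reduce the statement to the probabilistic analysis already carried out for \cite[Algorithm~1]{hong_global_2020}. There, correctness is certified by the non-vanishing, at the randomly sampled point, of a single nonzero polynomial $\Theta$ in the variables $X$ of $E^t$ --- assembled from the Jacobian minors and the polynomials occurring in Steps~2--3 --- and $D_2$ is precisely the quantity for which the Schwartz--Zippel lemma yields $\Pr[\Theta = 0] \le 1-p$ over the prescribed sampling range. Running the modified procedure, i.e.\ substituting a tuple of integers $\widehat b$ for the variables in $B$ and then letting \cite[Algorithm~1]{hong_global_2020} run on the reduced system $E^t_{\widehat b}$ so obtained, amounts, for the purposes of this analysis, to generating the sampled point in two stages: its $B$-coordinates $\widehat b$ drawn from $[1,\frac{4}{3}D_2]$, and the remaining coordinates drawn afterwards by \cite[Algorithm~1]{hong_global_2020} applied to $E^t_{\widehat b}$.

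First I would show that there is exactly one new way this two-stage process can fail: the specialization of $B$ to $\widehat b$ may be degenerate, so that $E^t_{\widehat b}$ no longer carries the identifiability information of the original model. Because $B$ is a transcendence basis --- every variable outside $B$ is algebraically dependent modulo $E^t$ on the variables in $B$ --- specializing the $B$-coordinates to a point lying off a suitable proper subvariety of $\mathbb{C}^{|B|}$ leaves the dimension of the solution set, the rank of the Jacobian of $E^t$, and hence the classification computed downstream unchanged; this is the rigorous counterpart of the phenomenon in \Cref{example1}, and it is exactly why \Cref{algo:alg_indep} returns a transcendence basis and not an arbitrary set of non-identifiable parameters. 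Encoding this ``good specialization'' condition as the non-vanishing of an explicit nonzero polynomial $Q\in\mathbb{C}[B]$, the modified procedure is correct whenever $Q(\widehat b)\ne 0$ and, in addition, \cite[Algorithm~1]{hong_global_2020} succeeds on $E^t_{\widehat b}$. A union bound therefore gives a failure probability at most $\Pr[Q(\widehat b)=0]$ plus the conditional failure probability of \cite[Algorithm~1]{hong_global_2020} on $E^t_{\widehat b}$ given $Q(\widehat b)\ne 0$; the latter is at most $1-p$ by the correctness guarantee applied to the reduced system, using that substituting integers into $B$ cannot raise any of the degree data entering $D_2$, and the former is at most $\deg Q / (\tfrac{4}{3}D_2)$ by Schwartz--Zippel over $\mathbb{C}[B]$ with the enlarged range. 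Checking that the two contributions still sum to at most $1-p$ is where the explicit shape of $D_2$ from \cite[Algorithm~1]{hong_global_2020} and the extra room $\tfrac{4}{3}D_2-D_2=\tfrac13 D_2$ come in.

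The hard part will be the middle step: exhibiting $Q$ with a degree bound tight enough that the extra sampling room $\tfrac13 D_2$ absorbs it. This requires pinning down which subvariety of $\mathbb{C}^{|B|}$ must be avoided for the identifiability classification to be stable under specialization, producing an explicit polynomial that cuts it out --- a resultant, or a maximal Jacobian minor attached to the pivot-column construction of \Cref{algo:alg_indep} --- and estimating its degree against the same degree data of $E^t$ that defines $D_2$, in particular verifying that $E^t_{\widehat b}$ inherits a bound no larger than $D_2$ so that the leftover probability budget covers $\Pr[Q(\widehat b)=0]$. Once that bound is in place, the Schwartz--Zippel estimates and the two-stage bookkeeping are routine, and the correctness proof of \cite[Algorithm~1]{hong_global_2020} transfers verbatim to the system obtained after substitution.
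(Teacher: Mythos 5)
Your two-stage sampling framework and the use of the extra room $\frac{4}{3}D_2-D_2=\frac{1}{3}D_2$ to absorb a new bad event match the intent of the paper's argument, but the step you defer as ``the hard part'' is the entire content of the proof, and the tools you propose for it point in the wrong direction. The paper does not characterize a good specialization by preservation of the dimension of the solution set or of the rank of the Jacobian of $E^t$; neither invariant certifies what actually has to survive, namely a \emph{witness of non-global identifiability}. Concretely: if $\theta$ is locally but not globally identifiable, then (by \cite[Theorem~5.5]{hong_global_2020}, assuming the original integer substitutions avoid the zero set of $Q\widetilde{P_2}$) the variety $V$ of $\widehat{E^t}$ has two irreducible components $V_1,V_2$ with distinct finite projections onto the $\theta$-axis, and what must be shown is that the fiber of the projection $\pi$ onto the $B$-coordinates over the sampled point $b$ still meets \emph{both} $V_1$ and $V_2$, so that two points $a_1,a_2$ with different $\theta$-coordinates persist after specialization (this, not dimension or rank, is the rigorous counterpart of \Cref{example1}). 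The paper obtains this from \cite[Lemma~4.4]{hong_global_2020} applied componentwise: for each $V_i$ there is a proper subvariety $Y_i\subset\mathbb{A}^{|B|}$ with $\deg Y_i\le\deg V_i$ such that $\pi^{-1}(b)\cap V_i\ne\varnothing$ for all $b\notin Y_i$. That is where the transcendence-basis property actually enters.

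The degree bookkeeping you leave open is then resolved not by a resultant or a maximal Jacobian minor but by \cite[Proposition~3]{Heintz}: there is a single nonzero $P_s\in\mathbb{C}[B]$ vanishing on all the $Y_i$ with $\deg P_s\le\sum_i\deg Y_i\le\deg V\le\deg\widehat{E^t}\le D_2(1-p)/6$, the last inequality coming from the definition $D_2\ge 6\deg\widehat{E^t}/(1-p)$. Combined with $\deg Q\widetilde{P_2}\le D_2(1-p)/2$ this gives $\deg P_sQ\widetilde{P_2}\le\bigl(\frac{4}{3}D_2\bigr)(1-p)/2$, and a single Schwartz--Zippel application over $[1,\frac{4}{3}D_2]$ to the product polynomial (rather than your two-stage union bound) finishes the proof. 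Without a bound of the form $\deg P_s=O(\deg\widehat{E^t})$ the budget of $\frac{1}{3}D_2$ cannot be justified, and neither a resultant (whose degree can grow multiplicatively in the input degrees) nor a Jacobian minor (which controls local, not global, identifiability) obviously supplies one; so the specific combination of the fiber lemma and Heintz's hypersurface is essential, not routine.
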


\begin{proof} See Section~\ref{sec:proof}
\end{proof}

\subsection{Heuristics for best choice}

The result of \Cref{algo:alg_indep} is not unique and can change based on the column arrangement. For a basis of size \(k\), we considered all \(N=\binom{n}{k}\) combinations of columns if feasible. For large values \(N\), we create a large sample of \(K<N\) combinations. In the final code, the user can specify sample size \(K\).
The heuristic for picking the best possible transcendence basis is as follows:
\begin{enumerate}
    \item For a given transcendence basis \(T\) of size \(k\), before performing the substitution, collect degrees of monomials (i.e. sum of degrees of each variable in monomial) that contain elements of \(T\) avoiding double-counting of degrees (i.e., if a monomial contains more than one transcendental element, we collect its degree once);
    \item From the previous step, we obtained an array \(A_{\mu}\) of integer degree values \(d_{\mu}[m]\) for each member \(\mu \in T\) and monomial \(m\);
    \item Normalize the array by computing \(A_{\mu}[m] :=\frac{d_{\mu}[m]}{\sum\limits_{d_{\mu}[m'] \in A_{\mu}}d_{\mu}[m']}\)
    \item Compute the entropy of \(A_{\mu}\) as
          \(\mathcal{H}_{\mu} := -\sum_{m}A_{\mu}[m]\log\left(A_{\mu}[m]\right).\)
    \item At this step, we have a collection of entropies \(\{\mathcal{H}_{\mu} | \mu \in T\}\) for each possible transcendence basis. Sort this collection, so that larger entries end up last (this can be done by a lexicographic approach). Pick the last transcendence basis.
\end{enumerate}
\Cref{heuristic_scatter} shows how the CPU time changes depending on the value of the resulting entropies. We pick the right-most basis with "highest" entropy values. Upper bound trend line (in red) shows that our choice yields better-than-median improvement.
\begin{figure}
    \centering
    \includegraphics[width=0.5\textwidth]{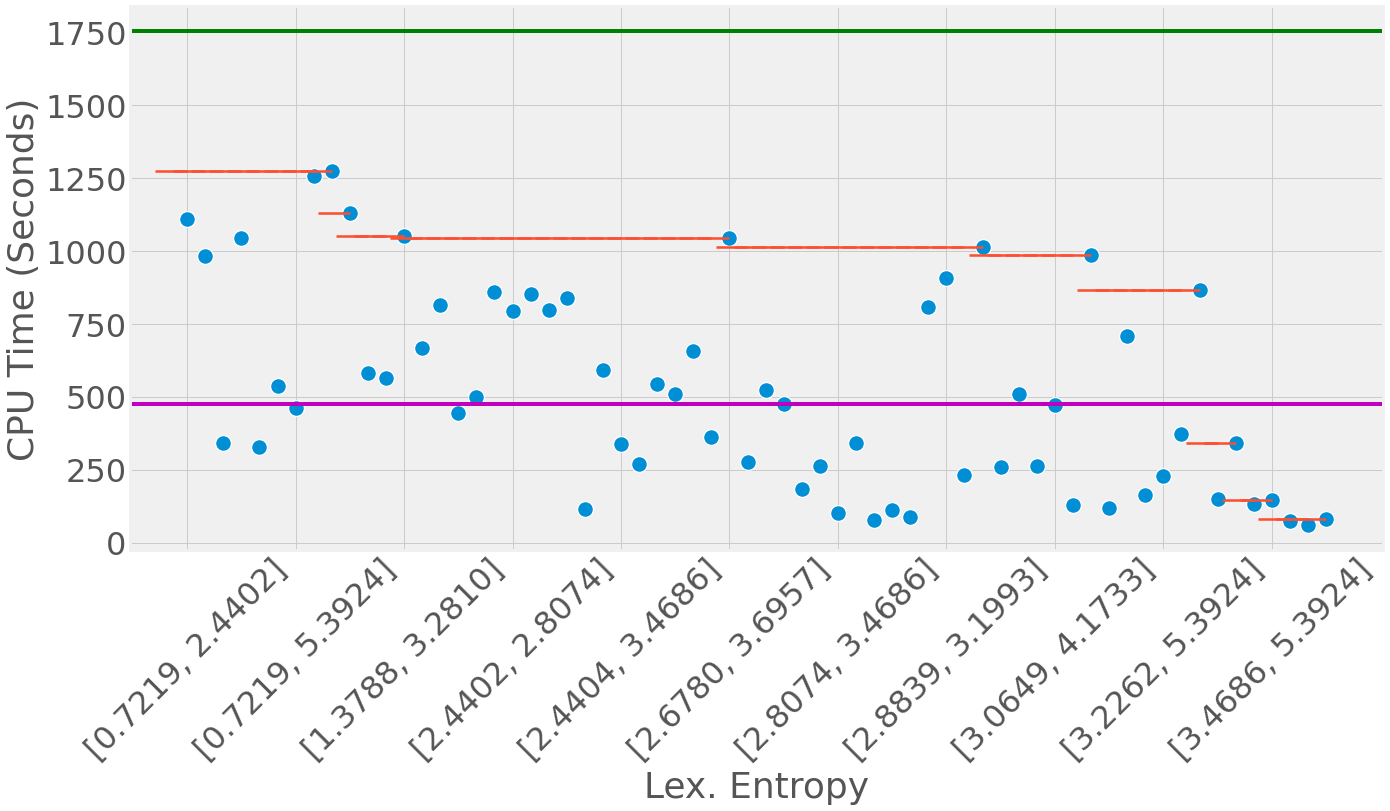}~~\vspace{0.5em}
    \includegraphics[width=0.5\textwidth]{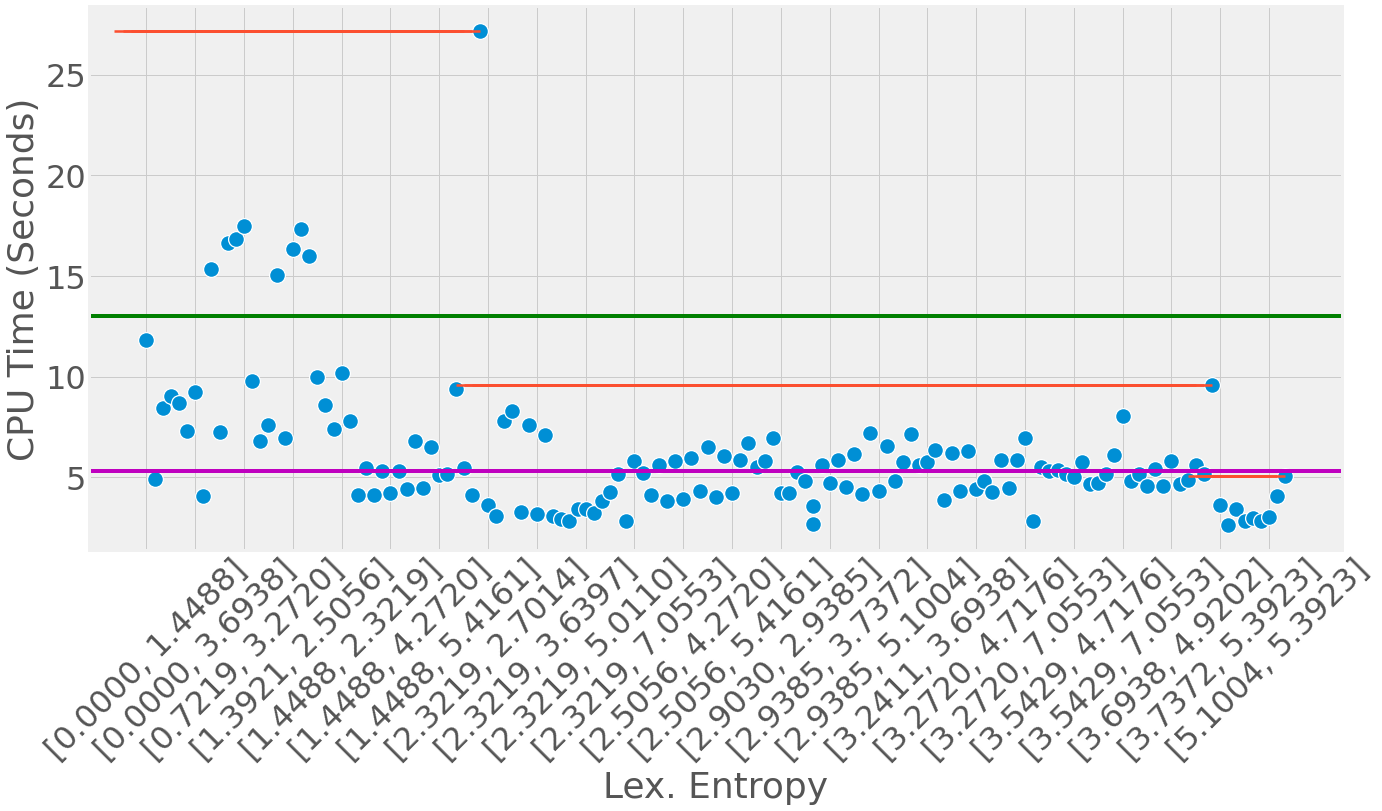}
    \caption{Impact of entropy-based transcendence basis choice on the CPU time of F4 algorithm for two ODE models. The {\color{green} green} line denotes the default time (original system), the {\color{magenta} magenta} line shows the median CPU time of each basis, the {\color{red} red} lines show the upper bound trend. Top image: \Cref{goodwin}, bottom image: \Cref{HIV}.}
    \label{heuristic_scatter}
\end{figure}
\subsection{Why does maximum degree weighted count entropy work?}

Let us explain the reasoning behind the approach above.
The \emph{degree weighted count entropy} of a transcendental element $\mu$ is capturing three heuristics at once:
\begin{enumerate}
    \item Does $\mu$ appear in a lot of monomials?
    \item Does $\mu$ have a large total degree?
    \item Does $\mu$ \textit{typically} contribute to monomials of a large degree?
\end{enumerate}
All considerations above arise from possible causes of computational hardness for an F4 algorithm to compute Gr\"obner basis.
Simply put, we are counting the instances of $\mu$ with a weight that captures its \textit{typical} (see \cite{cover1999elements})
degree weight.
Let us illustrate this with a simple example: suppose that \(p_1\) and \(p_2\) appear in the following collections \(M_{p_1},~M_{p_2}\)
\begin{align*}
    M_{p_1} & = \{ p_1^{10} x_1^5x_2^5 , p_1x_1 , p_1 x_2, p_1 x_3,  p_1 x_4, p_1 x_5 \},                  \\
    M_{p_2} & = \{ p_2^{2} x_1^2x_2 , p_1^3x_1^2 , p_1 ^2x_2^3, p_1 ^3x_3^2,  p_1^2 x_4^3, p_1^3 x_5^2 \}.
\end{align*}
Both sets
have the same total degree and number of monomials, 30 and 6, respectively. However, $p_1$ is in one complicated monomial, but all monomials of $p_2$ are complicated. Taking the degree weighted count entropy, we get
\begin{gather*}
    \mathcal{H}_{p_1} = -\frac{10}{30} \log_2\left(\frac{10}{30}\right) - 5 \frac{2}{30} \log_2\left(\frac{2}{30}\right) \approx 1.831,\\
    \mathcal{H}_{p_2} = -6 \frac{5}{30} \log_2\left(\frac{5}{30}\right) \approx 2.584,
\end{gather*}
and the maximum entropy method selects the more complicated transcendental element in cases where simply counting degrees or occurrences would fall apart.

\section{Benchmarks}
In this section, we present CPU time and memory usage comparison across three different setups for {\sc Maple} and Magma computer algebra systems. We use the variable order as in \cite[eq. 8]{bessonov2022obtaining}. We present CPU time and memory usage of F4 algorithm on SIAN-produced ideals:
\begin{itemize}
    \item with no changes (positive dimension)
    \item without transcendence basis (zero-dimensional ideals), denoted as ``0-dim''
    \item without transcendence basis and with weighted ordering as defined by the main result of \cite{bessonov2022obtaining}
\end{itemize}
These results are shown in \cref{benchmarks_0dim_maple,benchmarks_0dim_weights_maple,benchmarks_0dim_magma,benchmarks_0dim_weights_magma}.
We observe significant improvements, especially in cases where {\sc Maple} would otherwise be unable to complete the Gr\"obner basis computation. Combined with the weighted ordering \cite{bessonov2022obtaining}, we observe an significant combined speedup. In \Cref{benchmarks_0dim_maple} and \Cref{benchmarks_0dim_weights_maple}, ``N/A'' stands for the following error message returned by {\sc Maple}:
``\texttt{Error, (in Groebner:-F4:-GroebnerBasis) numeric exception:~division by zero}''.
\vspace{0.2em}
\begin{table*}[!h]
    \centering
    \resizebox{\columnwidth}{!}{
        \begin{tabular}{||c|c|c|c||c|c|c||c|c|c||}
            \hline
            \multicolumn{4}{||c||}{Model information} & \multicolumn{3}{c||}{Time (min)} & \multicolumn{3}{c||}{Memory (GB)}                                                                        \\
            \hline
            Model                                     & num.                             & num.                              & tr. & default & 0-dim  & speedup    & default & 0-dim  & improvement \\
            name                                      & polys.                           & vars.                             & deg & system  & system &            & system  & system &             \\\hline\hline
            COVID Model,                              &                                  &                                   &     &         &        &            &         &        &             \\
            \cref{ssaair}                             & 49                               & 48                                & 2   & N/A     & N/A    & N/A        & N/A     & N/A    & N/A         \\\hline
            QWWC                                      &                                  &                                   &     &         &        &            &         &        &             \\
            \cref{qwwc}                               & 58                               & 50                                & 1   & N/A     & 111.8  & \(\infty\) & N/A     & 1.21   & \(\infty\)  \\\hline
            SIR COVID Model                           &                                  &                                   &     &         &        &            &         &        &             \\
            \cref{siraqj}                             & 79                               & 81                                & 7   & 12      & 0.1    & 13.4       & 11.5    & 0.8    & 14.1        \\\hline
            Goodwin Oscillator                        &                                  &                                   &     &         &        &            &         &        &             \\
            \cref{goodwin}                            & 42                               & 43                                & 2   & 29.8    & 1.3    & 22.4       & 10.6    & 0.8    & 12.5        \\\hline
            SEIR,                                     &                                  &                                   &     &         &        &            &         &        &             \\
            \cref{seir}                               & 44                               & 45                                & 2   & 2.2     & 0.4    & 5.9        & 3.3     & 0.5    & 6.2         \\\hline
            HIV,                                      &                                  &                                   &     &         &        &            &         &        &             \\
            \cref{HIV}                                & 59                               & 55                                & 2   & 0.2     & <0.1   & 4.3        & 0.2     & 0.1    & 2.4         \\\hline
        \end{tabular}
    }%\\\vspace{0.5cm}
    \caption{Results of Gr\"obner basis computation step of SIAN with positive characteristic \(p=11863279\) in {\sc Maple} 2021.2. We show comparison of default computation and zero-dimensional system (without transcendence basis).}
    \label{benchmarks_0dim_maple}
\end{table*}
\begin{table*}[!ht]
    \centering
    \resizebox{\columnwidth}{!}{
        \begin{tabular}{||c|c|c|c||c|c|c||c|c|c||}
            \hline
            \multicolumn{4}{||c||}{Model information} & \multicolumn{3}{c||}{Time (min)} & \multicolumn{3}{c||}{Memory (GB)}                                                                                                                                                                           \\
            \hline
            Model                                     & num.                             & num.                              & tr. & weights                      & 0-dim with                           & speedup & weights                      & 0-dim with                           & improvement \\
            name                                      & polys.                           & vars.                             & deg & \cite{bessonov2022obtaining} & weights \cite{bessonov2022obtaining} &         & \cite{bessonov2022obtaining} & weights \cite{bessonov2022obtaining} &             \\\hline\hline
            COVID Model,                              &                                  &                                   &     &                              &                                      &         &                              &                                      &             \\
            \cref{ssaair}                             & 49                               & 48                                & 2   & 602.3                        & 340.6                                & 1.7     & 23.2                         & 20.6                                 & 1.1         \\\hline
            QWWC                                      &                                  &                                   &     &                              &                                      &         &                              &                                      &             \\
            \cref{qwwc}                               & 58                               & 50                                & 1   & 2.5                          & 1.8                                  & 1.1     & 1.4                          & 0.6                                  & 1.7         \\\hline
            SIR COVID Model                           &                                  &                                   &     &                              &                                      &         &                              &                                      &             \\
            \cref{siraqj}                             & 79                               & 81                                & 7   & 51.1                         & 5                                    & 10.3    & 10.7                         & 1.7                                  & 6.2         \\\hline
            Goodwin Oscillator                        &                                  &                                   &     &                              &                                      &         &                              &                                      &             \\
            \cref{goodwin}                            & 42                               & 43                                & 2   & 1.6                          & 0.8                                  & 1.9     & 0.7                          & 0.5                                  & 1.4         \\\hline
            SEIR,                                     &                                  &                                   &     &                              &                                      &         &                              &                                      &             \\
            \cref{seir}                               & 44                               & 45                                & 2   & 0.1                          & 0.1                                  & 1.0     & 0.1                          & 0.1                                  & 1.0         \\\hline
            HIV,                                      &                                  &                                   &     &                              &                                      &         &                              &                                      &             \\
            \cref{HIV}                                & 59                               & 55                                & 2   & 0.1                          & < 0.1                                & 3.3     & 0.1                          & <0.1                                 & 2.9         \\\hline
        \end{tabular}
    }%\\\vspace{0.5cm}
    \caption{Results of Gr\"obner basis computation step of SIAN with positive characteristic \(p=11863279\) in {\sc Maple} 2021.2. We show comparison of weighted ordering in positive- and zero-dimensional systems.}
    \label{benchmarks_0dim_weights_maple}
\end{table*}
\begin{table*}[!ht]
    \centering
    \resizebox{\columnwidth}{!}{
        \begin{tabular}{||c|c|c|c||c|c|c||c|c|c||}
            \hline
            \multicolumn{4}{||c||}{Model information} & \multicolumn{3}{c||}{Time (min)} & \multicolumn{3}{c||}{Memory (GB)}                                                                     \\
            \hline
            Model                                     & num.                             & num.                              & tr. & default & 0-dim  & speedup & default & 0-dim  & improvement \\
            name                                      & polys.                           & vars.                             & deg & system  & system &         & system  & system &             \\\hline\hline
            COVID Model,                              &                                  &                                   &     &         &        &         &         &        &             \\
            \cref{ssaair}                             & 49                               & 48                                & 2   & 3471.2  & 55.1   & 63.0    & 36.4    & 27.1   & 1.3         \\\hline
            QWWC                                      &                                  &                                   &     &         &        &         &         &        &             \\
            \cref{qwwc}                               & 58                               & 50                                & 1   & 429.0   & 296.1  & 1.4     & 11.0    & 6.5    & 1.7         \\\hline
            SIR COVID Model                           &                                  &                                   &     &         &        &         &         &        &             \\
            \cref{siraqj}                             & 79                               & 81                                & 7   & 6.6     & 0.6    & 10.7    & 5.6     & 0.7    & 8.2         \\\hline
            Goodwin Oscillator                        &                                  &                                   &     &         &        &         &         &        &             \\
            \cref{goodwin}                            & 42                               & 43                                & 2   & 22.4    & 1.5    & 14.7    & 3.1     & 0.5    & 5.7         \\\hline
            SEIR,                                     &                                  &                                   &     &         &        &         &         &        &             \\
            \cref{seir}                               & 44                               & 45                                & 2   & 3.8     & 0.5    & 7.8     & 2.0     & 0.3    & 6.0         \\\hline
            HIV,                                      &                                  &                                   &     &         &        &         &         &        &             \\
            \cref{HIV}                                & 59                               & 55                                & 2   & <0.1    & <0.1   & 3.2     & 0.3     & 0.2    & 1.5         \\\hline
        \end{tabular}
    }%\\\vspace{0.5cm}
    \caption{Comparison of Gr\"obner basis computation step of SIAN in Magma 2.26-8. We specify a positive characteristic \(p=11863279\). We show comparison of default computation and zero-dimensional system (without transcendence basis).}
    \label{benchmarks_0dim_magma}
\end{table*}
\begin{table*}[!ht]
    \centering
    \resizebox{\columnwidth}{!}{
        \begin{tabular}{||c|c|c|c||c|c|c||c|c|c||}
            \hline
            \multicolumn{4}{||c||}{Model information} & \multicolumn{3}{c||}{Time (min)} & \multicolumn{3}{c||}{Memory (GB)}                                                                                                                                                                           \\
            \hline
            Model                                     & num.                             & num.                              & tr. & weights                      & 0-dim with                           & speedup & weights                      & 0-dim                                & improvement \\
            name                                      & polys.                           & vars.                             & deg & \cite{bessonov2022obtaining} & weights \cite{bessonov2022obtaining} &         & \cite{bessonov2022obtaining} & weights \cite{bessonov2022obtaining} &             \\\hline\hline
            COVID Model,                              &                                  &                                   &     &                              &                                      &         &                              &                                      &             \\
            \cref{ssaair}                             & 49                               & 48                                & 2   & 517.4                        & 334.4                                & 1.5     & 21.6                         & 11.5                                 & 1.9         \\\hline
            QWWC                                      &                                  &                                   &     &                              &                                      &         &                              &                                      &             \\
            \cref{qwwc}                               & 58                               & 50                                & 1   & 1.2                          & 1.2                                  & 1.0     & 1.7                          & 0.8                                  & 2.2         \\\hline
            SIR COVID Model                           &                                  &                                   &     &                              &                                      &         &                              &                                      &             \\
            \cref{siraqj}                             & 79                               & 81                                & 7   & 31.4                         & 3.6                                  & 8.7     & 18.9                         & 3.8                                  & 5.0         \\\hline
            Goodwin Oscillator                        &                                  &                                   &     &                              &                                      &         &                              &                                      &             \\
            \cref{goodwin}                            & 42                               & 43                                & 2   & 0.8                          & 0.6                                  & 1.3     & 0.5                          & 1.0                                  & 0.5         \\\hline
            SEIR,                                     &                                  &                                   &     &                              &                                      &         &                              &                                      &             \\
            \cref{seir}                               & 44                               & 45                                & 2   & < 0.1                        & < 0.1                                & 0.8     & 0.2                          & 0.2                                  & 1.2         \\\hline
            HIV,                                      &                                  &                                   &     &                              &                                      &         &                              &                                      &             \\
            \cref{HIV}                                & 59                               & 55                                & 2   & < 0.1                        & < 0.1                                & 3.6     & 0.3                          & < 0.1                                & 4.1         \\\hline
        \end{tabular}
    }%\\\vspace{0.5cm}
    \caption{Comparison of Gr\"obner basis computation step of SIAN in Magma 2.26-8. We specify a positive characteristic \(p=11863279\). We show comparison of weighted ordering with and and without transcendence basis.}
    \label{benchmarks_0dim_weights_magma}
\end{table*}

\section{Conclusions}
We presented an enhancement that decreases runtime and memory consumption of identifiability testing algorithm SIAN \cite{hong_sian_2019}. Our solution
consists of finding and removing the transcendence basis from the algebraically independent non-identifiable model parameters. In addition, the method can be of increased
efficiency if combined with other Gr\"obner basis-specific improvements, such as weighted variable orderings. We showed
this by applying the weighted ordering method  \cite{bessonov2022obtaining} to a polynomial system with transcendence basis removed.

It is possible to optimize the substitution procedure by, for instance, sampling the random values for transcendental elements directly into the polynomials. However, we did not observe significant differences at the final Gr\"obner basis stage.

\printbibliography{}

\appendix{}
% \numberwithin{equation}{section}

\section{ODE models}
In this section, we  prove our main result about the probability guarantee for the random substitutions into the transcendence basis and present details about ODE models we considered for transcendence basis substitution. We will present the ODEs, the output functions, and the discovered bases of the models used in the analysis of this paper.

\subsection{Proof of Theorem~\ref{thm:main}}\label{sec:proof}
By \cite[Lemma~4.4]{hong_global_2020}, applied to $X$ being an irreducible component $V_i$  of the variety $V$ defined by $\widehat{E^t}$ and $\pi$ the projection to the affine space $\mathbb{A}^{|B|}$ of  the $B$-variables, there exists a proper subvariety $Y_i \subset\mathbb{A}^{|B|}$
such that $\deg Y_i \leq \deg V_i$ and, for every
$a \in  \mathbb{A}^{|B|}\setminus Y_i$,
we have: $\pi^{-1}(a)\cap V_i \ne \varnothing$.
Suppose $\theta$ is a locally but not globally identifiable parameter.
By \cite[Theorem~5.5]{hong_global_2020}, the projection of $V$ onto $\theta$-axis is finite and has the cardinality $> 1$ if the integer substitutions that are used to convert $E^t$ into $\widehat{E^t}$ are outside the zero set of the polynomials $Q$ and $\widetilde{P_2}$, which are defined in the proof of \cite[Theorem~5.5]{hong_global_2020}.
Thus there exist components, say, $V_1$ and $V_2$ of $V$ with different projections onto the $\theta$-axis.
We choose $b$ such that $b \not\in Y_i$ for every $i$.
Then there exist $a_1 \in V_1 \cap \pi^{-1}(b)$ and $a_2 \in V_2 \cap \pi^{-1}(b)$ having different $\theta$-coordinates.
By \cite[Proposition~3]{Heintz}, there exists a nonzero polynomial $P_s$ of degree at most
$\sum_i\deg Y_i\leq\deg V$
such that, for each $i$, $P_s$ is zero on $Y_i$.
So, outside of the zero set of $P_sQ\widetilde{P_2}$, substitutions into the $B$-variables cannot convert non-global identifiability into global identifiability.
The proof of \cite[Theorem~5.5]{hong_global_2020} introduces a number $D_2\geq 6\deg \widehat{ E^t}/(1-p)\geq 6\deg P_s/(1-p)$ and such that $\deg Q\widetilde P_2 \leq D_2(1-p)/2$.
We finally have $\deg P_sQ\widetilde P_2\leq D_2(1-p)(\frac{1}{2}+\frac{1}{6})=D_2(1-p)\frac{4}{6}  =\left(\frac{4}{3}D_2\right)(1-p)/2$ to obtain the desired probability guarantee.

\subsection{Goodwin oscillator}

The system \cref{goodwin} originates in~\cite{goodwin_oscillatory_1965} describing time periodicity in cell behavior. This example has 4 state variables \(x_{1,2,3,4}\) and 6 parameters.
\begin{equation}
    \begin{cases}
        \dot{x}_1 = -b   x_1 + \frac{1}{(c + x_4)}, & \dot{x}_2 = \alpha   x_1 - \beta   x_2,                               \\
        \dot{x}_3 = \gamma   x_2 - \delta   x_3,    & \dot{x}_4 = \frac{\sigma   x_4   (\gamma   x_2 - \delta   x_3)}{x_3}, \\
        y = x_1
    \end{cases}
    \label{goodwin}
\end{equation}
The transcendence basis found by our approach here has size 2 and the optimal combination consists of the initial condition \(x_3(0)\) and parameter \(\gamma\).

\subsection{A different SEIR-like COVID-19 model}

The following model is a COVID-19 epidemiological model coming from ~\cite[example 37, table 1]{massonis_structural_2020}.

\begin{equation}
    \begin{cases}
        \dot{S}_d =  -\epsilon_s \beta_a (A_n + \epsilon_a A_d) S_d - h_1 S_d + h_2 S_n - \epsilon_s \beta_i S_d I_n, \\
        \dot{S}_n =  -\beta_i S_n I_n - \beta_a (A_n + \epsilon_a A_d) S_n + h_1 S_d - h_2 S_n,                       \\
        \dot{A}_d =  \epsilon_s \beta_i S_d I_n + \epsilon_s \beta_a (A_n + \epsilon_a A_d) S_n + h_2 A_n -           \\
        - \gamma_{ai} A_d - h_1 A_d,                                                                                  \\
        \dot{A}_n =  \beta_i S_n I_n + \beta_a (A_n + \epsilon_a A_d) S_n + h_1 A_d -                                 \\
        - \gamma_{ai} A_n - h_2 A_n,                                                                                  \\
        \dot{I}_n =  f \gamma_{ai} (A_d + A_n) - \delta I_n - \gamma_{ir} I_n,                                        \\
        \dot{R}   =  (1-f) \gamma_{ai} (A_d + A_n) + \gamma_{ir} I_n,                                                 \\
        y_1  = S_d,  y_2  = I_n
    \end{cases}
    \label{ssaair}
\end{equation}
In this model, even with transcendental elements removed, the computation does not finish without using weighted ordering via algorithm of \cite{bessonov2022obtaining}. The transcendence basis is \(\delta,~R(0)\).

\subsection{HIV epidemic model}
This is a biomedical model applied to HIV infection in~\cite{hiv2_paper}.
The outputs were changed to make the system more of a computational challenge to SIAN.
\begin{equation}
    \begin{cases}
        \dot{x} = \lambda - d x - \beta x v,\dot{y} = \beta x v - a y, \\
        \dot{v} = k y - u v, \dot{w} = c z y w - c q y w - b w,        \\
        \dot{z} = c q y w - h z,                                       \\
        y_1 = w,  y_2 = z
    \end{cases}
    \label{HIV}
\end{equation}
The transcendence basis of this model is \(\beta, c\).

\subsection{SEIR epidemiological model of COVID-19}

The next SEIR model for COVID-19 was presented in~\cite[Example 34]{massonis_structural_2020}.
\begin{equation}
    \begin{cases}
        \dot{S} = \Lambda - r  \beta S  I / N - \mu  S, & \dot{E} = \beta  S  I / N - (\epsilon  + \mu)  E, \\
        \dot{I} = \epsilon  E -(\gamma + \mu)  I,       & \dot{R} = \gamma  I - \mu  R,                     \\
        y = I + R.
    \end{cases}
    \label{seir}
\end{equation}

Transcendence basis for \Cref{seir} is \(\beta, N\).

\subsection{QWWC model}

This model comes from~\cite[Equation 67]{harrington2017reduction}
\begin{equation}
    \label{qwwc}
    \begin{cases}
        \dot{x} = -xa + zy + ay, \dot{w} = ez - wf + xy, \\
        \dot{z} = -cz - wd + xy, \dot{y} = bx + by - xz, \\
        g = x
    \end{cases}
\end{equation}

In this model, there is only one transcendental parameter \(d\). Substituting this parameter, we observe a tremendous speedup: the system finishes computation without error in {\sc Maple}.

\subsection{SIR-like COVID model}
This COVID-19 model has transcendence degree 7 with basis consisting of \(A(0), I(0), N(0), R(0), d_2,d_3,d_6\). The model comes from \cite[26]{massonis_structural_2020} where we added the equation \(\dot{N}=0\).
\begin{equation}
    \label{siraqj}
    \begin{cases}
        \dot{S} = b\,N - S\,(I\,\lambda + \lambda\,Q\,\epsilon_a\,\epsilon_q + \lambda\,\epsilon_a\,A + \lambda\,\epsilon_j\,J + d_1), \\
        \dot{I} = k_1\,A - (g_1 + \mu_2 + d_2)\,I,                                                                                     \\
        \dot{R} = g_1\,I + g_2\,J - d_3\,R,                                                                                            \\
        \dot{A} = S\,\lambda(I + \epsilon_a\,\epsilon_q Q+ \epsilon_a A + \epsilon_j J) - (k_1 + \mu_1 + d_4)\,A,                      \\
        \dot{Q} = \mu_1\,A - (k_2 + d_5)\,Q,                                                                                           \\
        \dot{J} = k_2\,Q + \mu_2\,I - (g_2 + d_6)\,J,                                                                                  \\
        \dot{N} = 0,
        y_1 = Q,
        y_2 =J
    \end{cases}
\end{equation}
\thanks{The authors are grateful to CCiS at CUNY Queens College. This work was partially supported by the NSF under grants CCF-1563942, CCF-1564132, DMS-1760448, DMS-1853650, and DMS-1853482}
\end{document}